\newtheorem{theorem}{Theorem}
\newtheorem{definition}{Definition}
\newtheorem{lemma}[theorem]{Lemma}
\newtheorem{remark}{Remark}
\newtheorem{example}{Example}
\DeclareMathOperator{\dist}{\delta}
\DeclareMathOperator{\Dr}{Dr} 
\DeclareMathOperator{\Gr}{Gr} 
\DeclareMathOperator{\tropGr}{tropGr} 
\DeclareMathOperator{\lin}{lin} 
\DeclareMathOperator{\valop}{val} 
\newcommand{\ones}{\mathbf{1}} 
\newcommand{\bergFan}{\widetilde{\mathcal{B}}} 
\newcommand \dvec{{\mathrm d}} 
\newcommand \Dmat{{\mathrm D}} 
\newcommand{\dtr}{d_{\mathrm{tr}}}
\newcommand{\edgeset}{E} 
\newcommand \K{{\mathbb K}} 
\newcommand \numleaves{\ensuremath{n}} 
\newcommand \numpairs{\ensuremath{m}} 
\newcommand{\NAAV}[1]{\lvert #1 \rvert_{\nu}} 
\newcommand{\PAAV}[1]{\lvert #1 \rvert_{p}} 
\newcommand \plucker{{\rho}} 
\newcommand \Q{{\mathbb Q}}
\newcommand \R{{\mathbb R}}
\newcommand \Z{{\mathbb Z}}
\newcommand \shift{{\epsilon}} 
\newcommand \treeroot{{\omega}} 
\newcommand{\Treespace}{\mathcal{T}} 
\newcommand{\Tmax}{\mathbb{T}_{\max}}
\newcommand{\TPS}[1]{\mathbb{TP}^{#1}} 
\newcommand{\TPT}[1]{\mathbb{R}^{#1}/\mathbb{R}\mathbf{1}} 
\newcommand{\Uspace}{\mathcal{U}} 
\NewDocumentCommand{\val}{o}{%
    \ensuremath{\IfNoValueTF{#1}{\mathrm{val}}{\valop\left(#1\right)}}}
\newcommand \vecOne{\ensuremath{u}} 
\newcommand \vecTwo{\ensuremath{v}} 
\newcommand \x{{\mathbf{x}}}
\newcommand{\RightComment}[1]{\hfill{\footnotesize\textcolor{gray}{#1}}}
\title{Phylogenetics in a warm place: computational aspects of the Tropical Grassmannian}
\author{%
Samir Bhatt\textsuperscript{1,2,*},
John Sabol\textsuperscript{3,*},
Papri Dey\textsuperscript{4,*},
Matthew J. Penn\textsuperscript{1,*},\\
David Duchene\textsuperscript{1},
Ruriko Yoshida\textsuperscript{3}
}
\date{}
\begin{document}
\maketitle

\begingroup
\renewcommand{\thefootnote}{}
\footnotetext{%
\textsuperscript{1} University of Copenhagen;
\textsuperscript{2} Imperial College London;
\textsuperscript{3} Naval Postgraduate School;
\textsuperscript{4} University of California, Santa Cruz.
 \textsuperscript{*} These authors contributed equally to this work.
}
\endgroup
\setcounter{footnote}{0}

\begin{abstract}
    Phylogenetic trees provide a fundamental representation of evolutionary relationships, yet the combinatorial explosion of possible tree topologies renders inference computationally challenging. Classical approaches to characterizing tree space, such as the Billera–Holmes–Vogtmann (BHV) space, offer elegant geometric structure but suffer from statistical and computational limitations. An alternative perspective arises from tropical geometry: the tropical Grassmannian ($\tropGr(2,n)$), introduced by Speyer and Sturmfels, which coincides with phylogenetic tree space. In this paper, we review the structure of the tropical Grassmannian and present algorithmic methods for its computational study, including procedures for sampling from the tropical Grassmannian. Our aim is to make these concepts accessible to evolutionary biologists and computational scientists, and to motivate new research directions at the interface of algebraic geometry and phylogenetic inference.\footnote{Code for all figures are available here \url{https://github.com/bhattsamir/tropical_grassmannian}}
\end{abstract}

A phylogenetic tree is a mathematical structure that connects the worlds of evolutionary biology and computer science. In evolutionary biology, it represents the evolutionary relationships among a set of taxa. In computer science, an unrooted phylogenetic tree on a set of $\numleaves$ taxa is defined as a tree $T\in \Treespace_\numleaves = (\mathcal{V}, \mathcal{E})$, where $\mathcal{V}$ is the set of vertices and $\mathcal{E}$ the set of edges (often weighted). The space of possible trees $\Treespace_\numleaves$ is finite, but intractably large and given by the Schröder number $(2\numleaves-5)!!$ in the unrooted case. The tree $T$ has exactly $\numleaves$ leaves (also called tips), each corresponding to one taxon, and $\numleaves - 2$ internal nodes, each of degree 3. For just $\numleaves=40$ individuals or species, the number of possible evolutionary trees is already greater than the number of hydrogen atoms in the Sun. Trying to find the correct tree in that space is like trying to locate one specific hydrogen atom somewhere inside the entire Sun. If the tree is rooted, the root node has out-degree 2, and there are $\numleaves - 1$ internal nodes. In both rooted and unrooted cases, the edges $\mathcal{E}$ are typically assigned non-negative weights, representing evolutionary distances or divergence times.

The central challenge in computational phylogenetics is finding the best tree, $T^*$ given some data $X$ and an objective criterion $\mathcal{L}$ that measures how well $T^*$ characterises the data.
i.e. 
\begin{align}
    T^* = \arg\min_{T \in \Treespace} \; \mathcal{L}(X, T).
    \label{eq:main_optimisation_problem}
\end{align}
Broadly three major objective criteria exist, maximum parsimony \cite{Fitch1971-tz}, maximum likelihood \cite{Felsenstein1983-wd} and minimum evolution \cite{Rzhetsky1993-vo}. Solving Equation \ref{eq:main_optimisation_problem} is NP-hard for all three criteria \cite{Foulds1982-wu,Roch2006-qs,Fiorini2012-lq}. This means that there is no known polynomial-time algorithm that always finds the globally optimal phylogenetic tree under standard objective criteria. Heuristic algorithms, such as hill climbing and tree rearrangement strategies often perform well in practice, but offer no guarantee of global optimality. Hill climbing performs well in phylogenetic tree search because the objective criteria landscapes on tree space exhibit strong local correlation, producing large basins of attraction around high-likelihood topologies and enabling local tree change moves to reliably ascend toward near-optimal trees. Therefore a central challenge in these heuristic algorithms is efficiently exploring tree space. This exploration is typically performed through local rearrangement operations that generate neighboring trees. Two common operations are Nearest Neighbor Interchange (NNI) and Subtree-Prune and Regraft (SPR). NNI modifies a tree by swapping subtrees across an internal edge, resulting in a new topology with minimal change. SPR involves cutting a subtree from the tree and reattaching it at a different edge, allowing for broader exploration of tree space. These operations define the neighborhood structure of the search, but naturally raise the question: what exactly is tree space, and how is it structured? A geometrically rigorous solution to this question was provided by Billera, Holmes, and Vogtmann \cite{Billera2001-il}, who showed that phylogenetic tree space can be endowed with the structure of a CAT$(0)$ (Cartan–Alexandrov–Toponogov) space, a type of non-positively curved metric space that admits unique geodesics between points. The Billera--Holmes--Vogtmann (BHV) tree space provides a geometric and combinatorial model for the space of phylogenetic trees with edge lengths. Formally, it is a piecewise Euclidean cubical complex in which each orthant (i.e., a Euclidean space $\R_{\geq 0}^k$) corresponds to a unique tree topology defined by a fixed set of internal splits. The dimension $k$ of an orthant equals the number of internal edges in the tree, which is $\numleaves - 3$ for a fully resolved unrooted binary tree with $\numleaves$ leaves. Within each orthant, trees differ only in the lengths of their internal edges and are equipped with standard Euclidean geometry. Orthants are glued together along shared lower-dimensional faces corresponding to unresolved trees (i.e., trees with collapsed edges or polytomies). The resulting space is connected, contractible, and forms a CAT(0) space. This structure enables well-defined and efficiently computable geodesic distances between trees in polynomial time ($\mathcal{O}(\numleaves^4)$) \cite{Owen2011-sm}.

Despite its elegant geometric properties, BHV tree space presents certain challenges, particularly for statistical inference. One such issue is \emph{stickiness}, a phenomenon arising from the non-manifold boundaries between orthants. Because the space is composed of orthants glued along lower-dimensional faces, geodesics and Fréchet means (i.e., the average tree) often lie on these lower-dimensional boundaries, corresponding to unresolved trees with one or more zero-length edges. As a result, even when input trees are fully resolved, their average under the BHV metric can be topologically unresolved. Additionally, the high-dimensional and combinatorial structure of BHV space makes some computational tasks, such as likelihood optimization or Bayesian posterior integration, difficult to implement efficiently, particularly as the number of taxa increases. In addition, Lin et al. \cite{convexity} showed that the dimension of the convex hull of three points in terms of the BHV metric over BHV space is unbounded in general, and that a convex hull in terms of the BHV metric over BHV space might not be closed. Therefore, unlike Euclidean space, it is not trivial to conduct a simple statistical analysis over BHV space. Finally, it is challenging to efficiently sample uniformly from the BHV space, again prohibiting Bayesian inference. These challenges have motivated the exploration of alternative models of tree space that better accommodate statistical and algorithmic needs.

A completely different way to fully characterise tree space that is still geometrically rigorous was provided by Speyer and Sturmfels \cite{Speyer2004-bg} and called the \emph{Tropical Grassmannian}. Before explaining what this exotically named object is, we first need to introduce the concept of the tree metric. A metric space $(X, \dist)$ (where $X$ is the set of taxa of interest and $\dist$ is some distance function) is called a \textit{tree metric} if there exists a weighted tree $T$ such that for all $a, b \in X$, $\dist(a, b)$ equals the sum of edge weights along the unique path between $a$ and $b$ in $T$. Stated more simply, the leaf to leaf distance between any two taxa, is the sum of the branch lengths between them. Unfortunately, it is exceedingly unlikely that any estimated evolutionary distance matrices will meet this constraint. Therefore, one might ask, ``is there a sufficient condition for a matrix to be tree metric?'' Such a condition would then allow practitioners to know if a given distance matrix is valid. Given a tree, the cophenetic vector, which records the pairwise distances between taxa as measured by the height of their least common ancestor in the tree, naturally defines a tree metric. However, for an arbitrary distance matrix (also called a dissimilarity map) to qualify as a metric, it must satisfy the following properties:
\[
\begin{aligned}
&\text{(i) } \dist(i,i) = 0 \quad \forall i \\
&\text{(ii) } \dist(i,j) = \dist(j,i) \quad \forall i,j \\
&\text{(iii) } \dist(i,j) \ge 0 \quad \forall i,j \\
&\text{(iv) } \dist(i,j) \le \dist(i,k) + \dist(k,j) \quad \forall i,j,k.
\end{aligned}
\]
Further, for this distance matrix to be a tree metric, Buneman \cite{Buneman1974-np} introduced a necessary and sufficient condition called the \textit{Four-Point Condition}.
\begin{theorem}[Four-Point Condition]
  \label{thm:four_point_condition}
  Let $(X,\dist)$ be a metric space, and let $x_1,x_2,x_3,x_4 \in X$. Then
  \[
    \dist(x_1,x_2) + \dist(x_3,x_4)
    \;\le\;
    \max \bigl\{\, \dist(x_1,x_3) + \dist(x_2,x_4),\;\; \dist(x_1,x_4) + \dist(x_2,x_3) \bigr\}.
  \]
\end{theorem}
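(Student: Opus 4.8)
The plan is to prove the displayed inequality by working directly with a tree realization of $\dist$. In the setting of the preceding paragraph the metric of interest is a tree metric, so I take $\dist$ to be realized by a weighted tree $T=(\mathcal{V},\mathcal{E})$ with nonnegative edge weights, in which $\dist(a,b)$ is the total weight of the unique $a$--$b$ path; the displayed inequality is then exactly Buneman's four-point condition, and establishing it is the necessary (``only if'') half of his characterization. First I would pass from $T$ to the subtree $T'$ spanned by $x_1,x_2,x_3,x_4$, namely the union of the paths joining them. Suppressing every degree-two vertex of $T'$ that is not one of the $x_i$ leaves a quartet: either a single internal vertex to which all four paths attach (the degenerate star), or two internal vertices $u,v$ joined by an internal edge of some length $e\ge 0$, with two leaves attached on the $u$ side and the other two on the $v$ side. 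Distances in $T'$ agree with $\dist$ because the spanning subtree carries the unique paths of $T$.

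Next I would read off the three pairwise sums in the resolved case. Label the two leaves on the $u$ side as $a,b$ and those on the $v$ side as $c,d$. A within-side distance such as $\dist(a,b)$ omits the internal edge, whereas a cross-side distance traverses it exactly once. Summing over a pairing, the ``cherry'' pairing $\{a,b\},\{c,d\}$ gives $\dist(a,b)+\dist(c,d)$, while each of the two ``crossing'' pairings gives $\dist(a,b)+\dist(c,d)+2e$. Hence the two crossing sums are equal to one another and exceed the cherry sum by exactly $2e\ge 0$; in particular the maximum among the three sums is attained at least twice. This is the heart of the argument, and it uses nonnegativity of the weights only through $2e\ge 0$.

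To conclude, I would verify that $\dist(x_1,x_2)+\dist(x_3,x_4)$ obeys the bound regardless of which quartet $T$ realizes. If the split groups $x_1,x_2$ together, this sum is the cherry sum and is the smallest of the three, hence at most $\max\{\dist(x_1,x_3)+\dist(x_2,x_4),\ \dist(x_1,x_4)+\dist(x_2,x_3)\}$. If instead the split separates $x_1$ from $x_2$, then $\dist(x_1,x_2)+\dist(x_3,x_4)$ is one of the two crossing sums, the other crossing sum is exactly one of the two terms inside the maximum, and the two crossing sums are equal; so again the inequality holds, with equality. The star case has $e=0$, which makes all three sums coincide and the bound trivially true.

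The only real obstacle is the reduction step: justifying that the four points induce a quartet after suppressing degree-two vertices, that the resulting internal edge length is nonnegative, and that passing to the spanning subtree preserves $\dist$. These facts follow from uniqueness of paths in a tree together with additivity of weights along a path, but they carry the geometric content; once they are in place the remaining case analysis is a one-line computation. A mild bookkeeping point is to treat all three labelings uniformly, which is why I phrase the final step in terms of whether or not the split separates $x_1$ from $x_2$ rather than fixing one topology at the outset.
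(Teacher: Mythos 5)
Your proof is correct and takes essentially the same route as the paper, which offers only an informal sketch (deferring a formal proof to standard textbooks): reduce to the quartet subtree spanned by the four points and observe that the two crossing pairings each exceed the cherry pairing by twice the internal edge length $2e \ge 0$, so the maximum is attained twice and the inequality follows. You were also right to read the hypothesis as ``$\dist$ is a tree metric'' rather than the literal ``metric space'' of the statement---as written the inequality is false in general (e.g.\ for the four vertices of a unit square in the Euclidean plane, where $\sqrt{2}+\sqrt{2} > 1+1$)---so your restriction correctly captures the intended necessary direction of Buneman's characterization.
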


The theorem can be stated equivalently as the two largest sums among the following are equal:
\begin{equation}
    \dist(x_1, x_2) + \dist(x_3, x_4), \quad
    \dist(x_1, x_3) + \dist(x_2, x_4), \quad
    \dist(x_1, x_4) + \dist(x_2, x_3).
\end{equation}
A proof of this theorem is available in most standard phylogenetic textbooks. However, it can be understood without a formal proof: in a tree, there is a unique path between any pair of nodes. When considering four leaves, the distances between them are constrained by the tree’s branching structure. Among the three possible ways to pair up the leaves into two disjoint pairs, the two pairs that share the longest common path through the tree will have the largest combined distances. The four-point condition captures this by requiring that the largest two of the three pairwise sums are equal. This reflects the fact that in a tree, any four points must fit into a consistent subtree structure, typically a quartet tree and the four-point condition detects whether such a structure exists.

The four-point condition characterizes tree metrics, meaning that if a distance matrix satisfies this condition, it is a tree metric and therefore represents a valid phylogenetic tree. Such a tree can then be efficiently reconstructed using algorithms such as neighbour joining. A sensible question would therefore be, as opposed to enumerating tree space by constructing discrete trees \cite{Penn2024-ja}, is there a way to define the space of all possible trees through the four-point condition? This is exactly what the Tropical Grassmannian achieves. In fact, the space of all four-point conditions was developed before the seminal work of Speyer and Sturmfels by Dress \cite{Dress1992-zm, Herrmann2009-ad} under the name \emph{rank 2 valuated matroids}, a space we now call the \emph{Dressian} $\Dr(2,n)$, which is just the Dressian $\Dr(r,n)$ for the specific case of $r=2$. As we shall see, the term ``valuated matroid'' correctly describes the space of phylogenetic trees. 

\subsection*{The Grassmannian $\Gr(2,n)$}

The Grassmannian $\Gr(r,n)$ is a widely used differentiable manifold. Here, we focus on the specific Grassmannian for $r=2$, i.e., $\Gr(2,n)$, which is the space of 2-dimensional linear subspaces of an $n$-dimensional vector space defined over a field $\mathbb{K}$. In certain applications, it is desirable to use an algebraically closed field, such as the Puiseux series (a power series that allows fractional exponents,
forming an algebraically closed field $\mathbb{C}\{\{t\}\}$ that ensures every
polynomial equation has a solution). For ease of exposition here, however, we will identify $\mathbb{K}$ with the reals. Having done so, we can formally define:
\begin{equation}
    \Gr(2,n) = \left\{ V \subset \R^n \mid \dim(V) = 2 \right\}.
\end{equation}
Equivalently, we can define the Grassmannian as the span of all linearly independent pairs of \emph{column} vectors:
\begin{equation}
    \Gr(2,n) =
    \left\{ \operatorname{span}(\vecOne, \vecTwo) \mid \vecOne, \vecTwo \in \R^n,\, \lambda_1,\lambda_2 \in \R,\, 
    \lambda_1 \vecOne + \lambda_2 \vecTwo = 0 \iff \lambda_1 = \lambda_2 = 0
    \right\}.
\end{equation}

\subsection*{Pl\"ucker Coordinates and the Pl\"ucker Ideal}

As we have noted, the Grassmannian $\Gr(2,n)$ is the space of all 2-dimensional subspaces of $\R^n$ (or $\mathbb{K}^n$). To describe such subspaces, one can use a basis matrix ($n \times 2$ matrix), as in $V := \begin{pmatrix} u & v \end{pmatrix}$, but this representation clearly isn't unique since for any $\lambda_1,\lambda_2\neq 0$, $V':= \begin{pmatrix}\lambda_1 \vecOne & \lambda_2 \vecTwo \end{pmatrix}$ span the exact same subspace. Pl\"ucker coordinates provide a more natural, invariant way to describe elements of $\Gr(2,n)$.

The Pl\"ucker coordinates of a matrix $V$ are given by the determinants of all maximal minors (the $2 \times 2$ submatrices) of $V$. Thus, given an ordering on the rows of $V$ (to fix our indices), we can define the \emph{Pl\"ucker vector} ($\plucker$) as the image of a map that sends our matrix $V\in\R^{n\times2}$ to its unique embedding in projective space $\mathbb{P}^{(\numpairs-1)}$:
\begin{equation}\label{eq:plucker_computation}
    V = \begin{pmatrix} \vecOne_1 & \vecTwo_1 \\ \vecOne_2 & \vecTwo_2 \\  \vdots & \vdots \\ \vecOne_n & \vecTwo_n
    \end{pmatrix},
    \quad
   \plucker=\left(\,\plucker_\edgeset\right),\, \edgeset := \{ (i,j) \mid i<j \in [n]\} 
    \quad \textrm{and}
    \quad
    \plucker_{ij} := \det \begin{pmatrix} \vecOne_i & \vecTwo_i \\ \vecOne_j & \vecTwo_j \end{pmatrix} = \vecOne_i \vecTwo_j - \vecOne_j \vecTwo_i,
\end{equation}
where $\numpairs=|\edgeset|=\binom{n}{2}$. The notation $\edgeset$ here denotes all the pairs of vertices of $T=(\mathcal{V},\mathcal{E})$ such that both vertices are leaves in $T$. Critically, all these Pl\"ucker coordinates \emph{also} satisfy a series of quadratic Pl\"ucker relations given by the set of $\binom{n}{4}$ quadrics of the form:
\begin{equation}\label{eq:plucker_relations}
  \plucker_{ij} \plucker_{kl} - \plucker_{ik} \plucker_{jl} + \plucker_{il} \plucker_{jk} = 0, \quad \text{for all } 1 \leq i < j < k < l \leq n.  
\end{equation}
These quadratic Pl\"ucker relations are homogeneous in the coordinates $\plucker_\edgeset$, which justifies viewing the Pl\"ucker vector projectively to ensure consistency under scaling. The set of all such quadratic relations generates the Pl\"ucker ideal $\mathcal{I}(2,n)$ in the coordinate ring $\mathbb{K}[\plucker]$. By construction, our Pl\"ucker vector resides in the common zero locus of these relations - namely, in the variety defined by the Pl\"ucker ideal.

Sampling from $\Gr(2,n)$ is incredibly simple computationally, and is shown in Algorithm \ref{algo: sample Gr(2,n)}. Algorithm \ref{algo: sample Gr(2,n)} performs a QR decomposition on an $n \times 2$ matrix with i.i.d.\ standard Gaussian entries, 
and then takes the column space of the orthonormalised matrix, yielding a random $2$-plane in $\R^n$ 
that is distributed uniformly with respect to the Haar measure on $\Gr(2,n)$.
\begin{algorithm}
\caption{Sampling a random element of $\Gr(2,n)$}
\begin{algorithmic}[1]
\Require Integer $n \geq 2$
\Ensure An orthonormal $n \times 2$ matrix $Q$ representing a point on $\Gr(2,n)$
\State Sample $A \in \R^{n \times 2}$ with entries i.i.d.\ $\mathcal{N}(0,1)$
\State Compute the reduced QR decomposition $V = QR$
\State \Return $Q$
\end{algorithmic}
\label{algo: sample Gr(2,n)}
\end{algorithm}

\subsection*{The Tropical Grassmannian}

With a basic understanding of the Grassmannian $\Gr(2,n)$, we now seek to transform it. In the seminal work of \cite{Speyer2004-bg}, the authors introduced a tropicalized version of the Grassmannian and prove this bijects to the space of phylogenetic trees and is isomorphic to the Billera-Holmes-Vogtmann (BHV) space \cite{Billera2001-il} (which we introduced earlier, which also captures the space of phylogenetic trees). To the unfamiliar reader, the tropical semiring (max-plus algebra), denoted as $\Tmax :=(\R \cup \{-\infty\}, \oplus=\max, \otimes=+)$, has elements consisting of the real numbers and one additional value, $-\infty$, to represent ``zero.'' In the tropical semiring the usual addition and multiplication operations are replaced by:
\begin{equation}\label{eq:MaxPlus_Algebra}
    a \oplus b := \max(a, b), \quad a \otimes b := a + b,
\end{equation}
which explains why we refer to $-\infty$ as the zero-element ($-\infty$ is absorbing in max-plus multiplication).
The \emph{tropical} Grassmannian, denoted $\tropGr(2,n)$, is the \emph{tropicalization} of $\Gr(2,n)$. Since $\Gr(2,n)$ is the variety defined by the quadratic Pl\"ucker relations, this amounts to tropicalizing the quadratic polynomials in \ref{eq:plucker_relations} and intersecting the resulting tropical hypersurfaces. This process requires that we operate over a field $\mathbb{K}$ with a \emph{non-Archimedean valuation}. A valuation, loosely speaking, is a way to measure the “size” of elements in a field that respects multiplication and addition. The choice of valuation and its computational realization is more important than it may initially appear, and it presents interesting opportunities for future work. We provide additional details below, though readers may skip this section if desired. Since our (tropical) multiplication and addition differ from the standard conventions, we expect that this valuation to induce an absolute value $\NAAV{\cdot}$ also different from the usual one. This is indeed the case. 

\begin{definition}[A Field with Valuation] Consider a field $\mathbb{K}$, along with a map $\val:\K\rightarrow \R\cup\{-\infty\}$ such that for any $a,b\in\mathbb{K}$ the following axioms hold:
\begin{equation}
\begin{aligned}
\text{1. } & \val(a) = -\infty \iff a=0,\\
\text{2. } & \val(ab) = \val(a) + \val(b), \\
\text{3. } & \val(a + b) \le \max \big(\val(a), \val(b)\big).
\end{aligned}
\label{eq:valuation}
\end{equation}
Equipped with such a map, we say that $\mathbb{K}$ is a field with valuation, or a \emph{valuated field}.
\end{definition}
At first glance, $\val$ appears to operate just like a logarithm ($\log$), for which the above axioms hold over $a,b\in\mathbb{R}_{\geq0}$. Unfortunately, the restrictive domain of $\log$ disqualifies it from being a valuation in its own right, and an attempt at composition via $\log(|a|)$ (where $\lvert \cdot \rvert$ is the usual absolute value) fails the third axiom. Nonetheless, the intuition gained by thinking of valuations as a form of logarithmic mapping is the right one. Indeed, since in $\Tmax$ every element is greater than or equal to zero (i.e., $-\infty$), the underlying set of our semiring is non-negative in the true mathematical sense. Additionally, a direct application of the axioms shows that $\val(1)=\val(1^2)=2\val(1)$, which implies $\val(1)=0$. Then $\val((-1)^2)=\val(1)$ means $\val(-1)=0$ also. Thus, $\val(-a)=\val(-1)+\val(a)=\val(a)$ for all $a\in \K$.

As it turns out, our initial guess using $\log(|a|)$ was not far off. In fact, $\val(a):=\log(\NAAV{a})$ provides an equivalent definition for valuation given a properly defined \emph{non-Archimedean} absolute value $\NAAV{\cdot}$.

\begin{definition}[Non-Archimedean Absolute Value]
For a field $\mathbb{K}$, we say that $\NAAV{\cdot}$ defines a \emph{non-Archimedean absolute value} on $\mathbb{K}$ if the following hold over any $a,b\in\mathbb{K}$:
\begin{equation}
\begin{aligned}
\text{1. }\ & \NAAV{a} = 0 \ \Longleftrightarrow\ a=0,\\
\text{2. }\ & \NAAV{ab} = \NAAV{a}\NAAV{b},\\
\text{3. }\ & \NAAV{a+b} \le \max\big(\NAAV{a},\NAAV{b}\big).
\end{aligned}
\label{eq:inducedNorm}
\end{equation}
\end{definition}

A few remarks are in order. There is a clear relation between the valuation axioms and the conditions on $\NAAV{\cdot}$ just listed. In fact, any valuation \emph{induces} such a set of conditions. We see that
\[
\NAAV{a} \ :=\ \exp\big(\val(a)\big), 
\]
follows directly from our earlier characterization of $\val(a):=\log(\NAAV{a})$. The third condition is a strengthening of the triangle inequality and characterizes norms that are also referred to as \emph{non-Archimedean}. More importantly, this norm means our field $\mathbb{K}$ is a metric space with the distance between any two elements defined by $\dist_\nu(a,b):=\NAAV{a-b}$. Such metric spaces are called \emph{ultrametric} spaces due to the ultrametric inequality $\dist_\nu(i,k)\leq \max \bigl( \dist_\nu(i,j),\dist_\nu(j,k) \bigr)$. Thus, valued fields are ultrametric spaces. It is easy to verify that for any three elements $i,j,k$, the ultrametric inequalities imply $\max \bigl( \dist_\nu(i,j),\dist_\nu(i,k),\dist_\nu(j,k) \bigr)$ is attained at least twice. Thus, every ultrametric also satisfies the 4-point condition and is a tree metric. Trees satisfying the ultrametric inequality are called \emph{equidistant trees} because they possess a unique point $\treeroot$ (called the root) located on some internal branch such that it has the same distance to every leaf, $\dist_\nu(\treeroot, i)=\dist_\nu(\treeroot, j)$ for all $i,j\in[n]$. We denote by $\Uspace_\numleaves \subset \Treespace_\numleaves$ the space of all equidistant trees on $\numleaves$ leaves.

\textbf{Max-Plus $p$-Adic Valuation}

Given everything we've just covered, the reader might (justifiably) want a concrete example of a mapping that fits all such requirements. What about $\val(a)=0$ for all $a\neq-\infty$? All axioms are clearly met, though perhaps somewhat trivially. Indeed, this particular map can be applied to any field, and is (predictably) called the \emph{trivial valuation}. Our purposes, however, will require \emph{non-trivial} valuations, for which we provide the following example. 

\begin{definition}[Max-Plus $p$-Adic Valuation]
Let $\K := \Q_p$ be the field of $p$-adic numbers\footnote{The $p$-adic numbers are the completion of $\Q$ with respect to the $p$-adic (non-Archimedean) absolute value.} for some prime $p$. Then for $x = a/b \in \Q_p$ with $a,b \in \Z$ and $b \neq 0$, the discrete\footnote{The fact that the map $\val_p(x)$ is discrete turns out not to be a serious restriction, as any map given by $(\lambda\cdot\val):\K\rightarrow \R\,\cup\,\{-\infty\}$ continues to satisfy the valuation axioms for any $\lambda>0$} max-plus $p$-adic valuation $\val_p$ is
\begin{equation}\label{eq:2_adic_val}
\val_p(x) := -\big(\val_p(a) - \val_p(b)\big), \qquad
\val_p(a) := \max\{k \in \Z_{\ge 0} \mid p^k \text{ divides } a\}, \quad \val_p(0) := -\infty,
\end{equation}
with corresponding non-Archimedean absolute value on $\Q_p$ given by
\[
\PAAV{x} := p^{val_p(x)}, \qquad \PAAV{0} := 0.
\]
\end{definition}
Readers familiar with the $p$-adics may wonder about the negative sign in the definition of $\val_p(x)$, which distinguishes it from the usual definition of $p$-adic valuation. This negation is why we call this the ``Max-Plus'' $p$-adic valuation, and it is a consequence of our decision to operate using max-plus algebra (as opposed to the min-plus algebra where the $p$-adics typically operate). Loosely speaking, this exchange of semirings can be thought of exchanging the roles of the numerator and denominator for $x$. Switching from min-plus to max-plus (or vice versa) is a straightforward isomorphism given by $\min(a)=-\max(-a)$ where we remind the reader $\val(-a)=\val(a)$ applies to any valuation. 

\begin{example}
For $p=3$, we have the following three-adic (max-plus) valuations and norms:
\begin{alignat*}{2}
    \val_p(9) &= \val_p(9/1) = -\big(\val_p(3^2)-\val_p(1)\big) = -(2-0) = -2,
    &\qquad \PAAV{9} &= p^{\val_p(9)} = 3^{-2} = 1/9, \\[4pt]
    \val_p\!\left(\tfrac{2}{3}\right) &= -\big(\val_p(2)-\val_p(3)\big) = -(0-1) = 1,
    &\qquad \PAAV{\tfrac{2}{3}} &= p^{\val_p(2/3)} = 3^1 = 3.
\end{alignat*}
\end{example}

Finally, we note that valuations can be applied in a vector space $\K^n$ in the natural way by applying $\val$ component-wise every element of a vector, that is $\val(\x)=\left(\val(x_1),\ldots,\val(x_n)\right)$.

In summary, a point in $\Gr(2,n)$ can be represented by a $n \times 2$ matrix whose Pl\"ucker coordinates measure linear dependence via the determinants. When we apply a $2$-adic valuation we no longer measure the magnitude of these determinants in real geometry, but rather their \emph{order of vanishing}, that is, how many powers of $2$ divide them. A large valuation means that a point is highly divisible by $2$, indicating that columns are almost linearly dependent in a $2$-adic sense, and therefore ``close'' in a hierarchical manner. Interpreting valuations as a distance turns this hierarchical closeness into a type of branching depth or cluster, with pairs with large valuations merging deep in the tree, while pairs with small valuations separate earlier in the tree. A phylogenetic tree is, therefore, the combinatorial shadow that emerges when the geometry of a point in the Grassmannian is collapsed via a valuation.

\textbf{Tropicalization.} We now arrive at the crux of how we ``tropicalize'' $\Gr(2,n)$. For $X\subseteq \Gr(2,n)$, the Fundamental Theorem of Tropical Algebraic Geometry tells us that the \emph{tropicalization of $X$}, $\operatorname{trop}(X)$, is equal to the closure of the coordinate-wise valuation of all points in $X$, i.e.,
\begin{equation}
    \operatorname{trop}(X)=\val(X):=\{\val(\x) \mid \x \in X\}.
\end{equation}
We refer to \cite{maclagan2015introduction} for details on the fundamental theorem. 
Recalling that the Pl\"ucker coordinates of $\Gr(2,n)$ are defined by the variety of the Pl\"ucker ideal $\mathcal{I}(2,n)$, we see that the tropical Grassmannian is the \emph{tropical variety} $\operatorname{trop}(X)$, where $X=V\bigl(\mathcal{I}(2,n)\bigr)$ is the variety of the Pl\"ucker ideal. Thus, by starting from any $n\times2$ matrix $V$ of rank $2$ (whose column vectors represent the span of a two-dimensional subspace $X\subset \R^n$), we can get a point in $\tropGr(2,n)$ by taking the valuation of each coordinate of the associated Pl\"ucker vector $\plucker$. Putting this all together yields the following (max-plus) tropicalization map
\begin{equation}\label{eq:trop_plucker_map}
\tau:\ \Gr(2,n)(\K)\ \longrightarrow\ \tropGr(2,n),
\qquad
V \mapsto \tau(V) := \val(\plucker),
\end{equation}
where $\plucker$ is the associated Pl\"ucker vector of $V$ from \cref{eq:plucker_computation}. Applying the valuation axioms to the image of this map (the coordinates of the 
\emph{tropical Pl\"ucker vector}) reveals a tropical version of the Pl\"ucker relations 
that we call the \emph{tropical Pl\"ucker relations}, which can be stated as follows.
For any distinct indices $i,j,k,l \in [n]$, the classical Pl\"ucker relation
\begin{align}
\plucker_{ij}\plucker_{kl} - \plucker_{ik}\plucker_{jl} + \plucker_{il}\plucker_{jk} &= 0,
\end{align}
tropicalizes to
\begin{align}
&\val(\plucker_{ij})\otimes\val(\plucker_{kl})\ \oplus\ \val(\plucker_{ik})\otimes\val(\plucker_{jl})\ \oplus\ \val(\plucker_{il})\otimes\val(\plucker_{jk}) \\
&=\max\big( \val(\plucker_{ij}) + \val(\plucker_{kl}),\;
            \val(\plucker_{ik}) + \val(\plucker_{jl}),\;
            \val(\plucker_{il}) + \val(\plucker_{jk}) \big),
\end{align}
is attained at least twice. It should be clear that these tropical Pl\"ucker relations are equivalent to those given by the four-point condition in \cref{thm:four_point_condition}, which means that the image of our tropicalization map $\tau$ is contained in tree space $\Treespace$. Stated simply, starting from a point in $\Gr(2,n)$, taking its Pl\"ucker relations and sending it to tropical space results in a tree metric, or a tree.

Note that the multiplicative homogeneity of the Pl\"ucker relations tropicalizes to additive homogeneity of the tropical Pl\"ucker relations. Concretely, if $V\in \K^{n\times 2}$ has columns $v_1,v_2$ and we rescale them by $\lambda_1,\lambda_2\in \K^\ast$, i.e.
$V'=(\lambda_1 v_1\,\,\,\lambda_2 v_2)=V\mathrm{diag}(\lambda_1,\lambda_2)$, then every Pl\"ucker coordinate scales by the same factor:
$\plucker_{ij}'=\lambda_1\lambda_2\,\plucker_{ij}$. Hence, for all $i<j$,
\[
\val(\plucker_{ij}')=\val(\plucker_{ij})+\val(\lambda_1)+\val(\lambda_2),
\]
or equivalently,
\[
\val(\plucker')=\val(\plucker)+\shift\,\mathbf{1},
\qquad \shift:=\val(\lambda_1)+\val(\lambda_2),
\]
where $\mathbf{1}$ is the all-ones vector in $\R^{\binom{n}{2}}$. In particular, $\R\mathbf{1}\subseteq \lin(\tropGr(2,n))$, the lineality space\footnote{Another way to arrive at this conclusion is to consider the special case of ultrametrics. For any ultrametric vector $u$ it is easy to see that $u+\shift\ones$ for any $\shift\in \R$ is also an ultrametric and that this does not hold for any other vector except $\ones$. Thus $\ones$ \emph{defines} the lineality space for the space of ultrametric trees. Since ultrametric tree space is contained in tree space, $\ones$ must be an element of $\lin(\Uspace)$.} of $\tropGr(2,n)$.
Working in the tropical projective torus, $\TPS{m-1} \coloneqq \mathbb{R}^m / \mathbb{R}\mathbf{1},$ 
where $\mathbf{1} = (1,\dots,1) \in \mathbb{R}^m$ and $m = \binom{n}{2}$, we identify vectors that differ by $\shift\,\mathbf{1}$; thus we may choose $\shift$ so that $d:=\val(\plucker)+\shift\mathbf{1}$ has  positive coordinates. Since adding $\shift\mathbf{1}$ preserves the tree-metric relations, $d$ is again a tree metric for any $\shift$.

Since any $\shift$ used in $\dvec=\val(\plucker)+\shift\ones\in\TPT{\numpairs}$ still represents the same point in $\tropGr(2,n)$ it is common practice to ``normalize'' such vectors in order to establish unique representatives. The \emph{canonical coordinates}, for example, select the largest component-wise vector for $\dvec$ such that $\dvec_{ij}=0$ in at least one coordinate. Other common normalization schemes include enforcing $\dvec_{ij}=0$ for a \textit{particular} $(i,j)$ index of $\edgeset$, or enforcing a mean-zero gauge such as $\sum_{(i,j)\in\edgeset} \dvec_{ij}=0$.

We are now ready to present our first algorithm for sampling from the space of phylogenetic trees. The idea is simple: take any rational $n\times 2$ matrix $V$ of rank 2, compute its Pl\"ucker embedding $\plucker$, apply a valuation component-wise to $\plucker$, and translate the resulting vector into the positive orthant as required. Here we utilize a $2$-adic valuation. The procedure is provided in \Cref{alg:pAdic_valuation}.

\begin{algorithm}[h!]
\caption{Sampling phylogenetic trees on $\numleaves$ leaves via the $2$-adic (non-Archimedean) valuation.}
\label{alg:pAdic_valuation}
\begin{algorithmic}[1]
\Require An $\numleaves \times 2$ matrix $V=(\vecOne\,\,\,\vecTwo)$ with $\vecOne,\vecTwo\in\mathbb{Q}^\numleaves$, and a scalar $\shift\in\mathbb{R}_{>0}$.
\Ensure A tree with non-negative edge lengths representing a point in $\tropGr(2,\numleaves)$.
\State Initialize $n\times n$ distance matrix $\Dmat$: $\Dmat_{ij} \gets 0 \,\,\,\forall (i,j\in[\numleaves])$.
    \State \textbf{(Compute Pl\"ucker coordinate)} $\plucker_{ij} \gets \vecOne_i \vecTwo_j - \vecOne_j \vecTwo_i$ for $1\le i<j\le \numleaves$. \RightComment{(\Cref{eq:plucker_computation})}
    \State \textbf{(Tropicalize $\plucker_{ij}$ via valuation)} $\Dmat_{ij} = \Dmat_{ji} \gets \val_2(\plucker_{ij})$ for $1\le i<j\le \numleaves$.\RightComment{(\Cref{eq:2_adic_val})}
\State \textbf{(Construct Tree from $\Dmat$)}$~ T \gets \operatorname{NJ}(\Dmat)$.\RightComment{(Neighbour Joining \cite{Gascuel2006-sr})}\label{Step:Tree_from_D}
\State \textbf{(Check Min Branch Length)} $\eta \gets \min_\mathcal{E}w_\mathcal{E}$.
\If{$\eta \leq 0$} 
    \State $\Dmat_{ij} \gets \Dmat_{ij}+\epsilon \quad \forall(i\neq j)$.
    \State Go to \Cref{Step:Tree_from_D}.
\EndIf
\State \Return $T$
\end{algorithmic}
\end{algorithm}

\begin{remark}
Tree metrics correspond to finite distances. If any Pl\"ucker minors vanish (i.e. $\plucker_{ij}=0$), the tropicalization map $\val(\plucker)$ will yield infinite valuations on these indices. Depending on how $V$ is sampled, this may not be a practical issue. Regardless, to avoid such issues one can (i) design inputs to avoid total cancellations, or (ii) “jitter” coefficients to break ties.
\end{remark}

P-adic valuations, while conceptually simple, present non-trivial computational challenges, and working over such fields can be intractable for many real-world applications. At this point, one might wonder about the Pl\"ucker embedding itself and whether, by computing determinants tropically, one can attain a tropical Pl\"ucker vector directly from a $n\times 2$ matrix $V$ rather than tropicalizing (by applying coordinatewise valuations) the Pl\"ucker vector attained from the usual embedding. If we define $\tilde{\plucker}_{ij}:=\vecOne_i\otimes \vecTwo_j \oplus \vecOne_j\otimes \vecTwo_i$ we see that the resulting vector $\tilde{\plucker}$ satisfies the tropical Pl\"ucker relations by construction, and thus also satisfies the 4-point condition. Such a construction is called the \emph{tropical Stiefel map} \cite{Fink_2015_Steifel}. Then by appropriate translation into the positive orthant as before we arrive at a tree metric! Unfortunately, the authors of \cite{Fink_2015_Steifel} show that this map is only capable of producing caterpillar trees and as such, provides little utility for our purposes here. Figure \ref{fig:adic_vs_stiefel} provides a comparison of the trees produced using the method of \cref{alg:pAdic_valuation} (left) and the tropical Stiefel map (right). Note the use of neighbour joining in \cref{alg:pAdic_valuation} is guaranteed to return a unique tree as the distance matrix is tree metric\cite{Gascuel2006-sr}.

\begin{figure}[htb]
  \centering
  \begin{minipage}{0.48\textwidth}
    \centering
    \includegraphics[width=\linewidth]{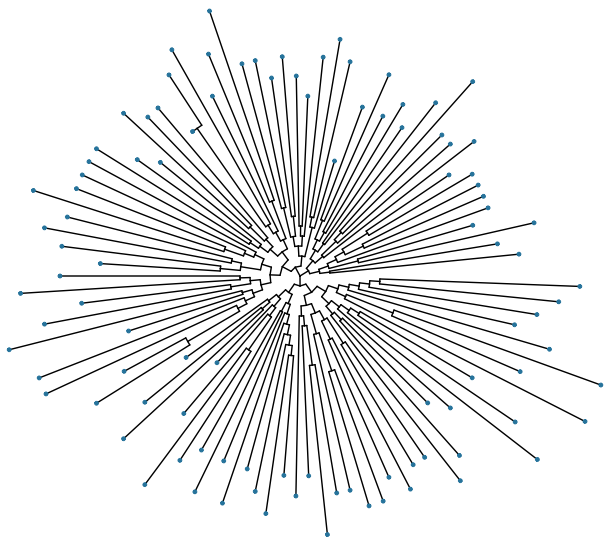}
  \end{minipage}\hfill
  \begin{minipage}{0.48\textwidth}
    \centering
    \includegraphics[width=\linewidth]{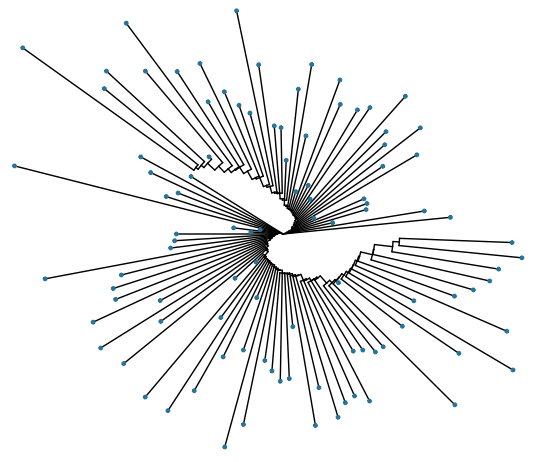}
  \end{minipage}
  \caption{Two phylogenetic trees with $\numleaves=100$ leaves obtained using the same $\numleaves \times 2$ matrix $V$. The tree on the left is obtained by tropicalizing the Pl\"ucker vector using a 2-adic valuation in the manner of Algorithm \ref{alg:pAdic_valuation}. The tree on the right is obtained using a tropical Stiefel map.}
  \label{fig:adic_vs_stiefel}
\end{figure}

\subsection*{The Structure of Tree Space}

Take the $\numleaves \times \numleaves$ matrix $D$ that represents some phylogenetic tree $\mathcal{T}$ on $\numleaves$ leaves, and consider $D$ as a weighted adjacency matrix for a graph $G$. Since every pair of leaves in $\Treespace_\numleaves$ is separated by some finite distance, we can assume that $D$ contains all finite entries, and so $G$ is the complete graph $K_\numleaves$ on $\numleaves$ nodes. Any (non-repeating) subset of at least three nodes in $G=K_\numleaves$ ($G$ is an undirected graph) forms a simple cycle $C$ by returning to the starting node, and the smallest such cycles are triangles of the form $C=\{i,j,k\}$ with associated edges weights $\{d_{ij},\,d_{ik},\,d_{jk}\}$. By \cref{eq:inducedNorm}, such triangles must have edge weights that achieve the maximum at least twice, and it is easy to show that this extends to any cycle of $G$. 

The condition imposed by the ultrametric inequality on the circuits of $G$ is an analogous way of defining that matrices $D$ that are contained in tree space. It follows that these ``ultrametric circuits'' form an equivalent definition of tree space. This is precisely the observation made by Ardila in \cite{Ardilasubdominant}, where the correspondence between $K_\numleaves$ and equidistant trees was made explicit. These ``dependencies'' amongst the edge weights of the circuits of $G$ define a matroid\footnote{For the reader who is curious about matroids the authors recommend \cite{gordon2012matroids}, which provides a delightful introduction to the topic. See also \cite{oxley2006matroid} for more rigorous treatment.} $M:=M(\edgeset,C)$ where $\edgeset$ is called the \emph{ground set} and $C=\{C_1,\ldots,C_N \mid C_k\subseteq \edgeset\}$ is the set of circuits. In particular, matroids that are encoded as graphs (where edges of $G$ define the ground set) are called \emph{graphical matroids} $M(G)$. Thus, the matroid we are interested in is $M(K_\numleaves)$ and explains our use of $\edgeset$ to denote the ground set. This shows precisely why Dress used the term \emph{valuated matroids} in describing these spaces - the tropicalization of $M(K_\numleaves)$, $\operatorname{trop}(M(K_\numleaves))$ defines the space of ultrametric trees.

At this point the reader might wonder, ``If valued fields are ultrametric spaces, why doesn't a tropicalization map (such as in \cref{alg:pAdic_valuation}) result in an ultrametric (i.e. an equidistant tree)?'' In fact, such maps can be specialized to yield ultrametrics, but this need not be the case in general, as the following example demonstrates.
\begin{example}\label{ex:not_ultrametric}
    Consider $V=(\vecOne\,\,\,\vecTwo)$ given by $\vecOne=(16,8,4,2)^\top$ and $\vecTwo=(0,1,1,2)^{\top}$ and the $2$-adic max-plus valuation. If we define $\Delta \vecOne:=(d_\nu(\vecOne_i,\vecOne_j)\mid1\leq i<j\leq4)$ it is easy to see that $\Delta \vecOne=2^y$ with $y=-(3,2,1,2,1,1)$ is an ultrametric. At the same time, $\plucker=(16,16,32,4,14,6)^\top$ yields $\val(\plucker)=-(4,4,5,2,1,1)^\top$, which isn't an ultrametric. If we change $\vecTwo$ so that  $\vecTwo=\ones$, then $\plucker=(8,12,14,4,6,2)^\top$, and we see that now $\val(\plucker)=y$ is now an ultrametric.
\end{example}
This example hints at how we might better understand the lineality space of tree space $L:=\lin(\Treespace)$ and how it differs from the (simpler) lineality space given by equidistant tree space (which is just $\lin(\Uspace)=\ones$). In fact, this distinction is precisely what distinguishes one space from the other. 

Recall our parameterization of $V=(\lambda_1 \vecOne\,\,\,\lambda_2 \vecTwo)$ given by scaling the columns of $V$ by $\lambda_1,\lambda_2\neq0$. Now, suppose that instead of performing column scaling, we instead perform row scaling according
\begin{equation*}
    V'=\begin{pmatrix}
        \mu_1\vecOne_1 & \mu_1\vecTwo_1\\
        \vdots & \vdots \\
        \mu_n\vecOne_n & \mu_n\vecTwo_n
    \end{pmatrix} .
\end{equation*}
Following our tropicalization map as before, we have
$\plucker'_{ij}=\mu_i\mu_j\plucker_{ij}$ so that $\val(\plucker')_{ij}=\val(\plucker_{ij})+\val(\mu_j)+\val(\mu_i)$. Unlike when we scaled column vectors, here the additive scaling is not uniform across the coordinates of $\val(\plucker')$. That is, we have $\val(\plucker')=\val(\plucker)+\val(\mu)$ where $\mu:=(\mu_i\mu_j \mid 1\leq i<j \leq n)$. In this way, we can think of the $L_n=\lin(\tropGr(2,n))$ as the image of our tropicalization map when column \emph{and} row scaling is performed on the input matrix $V$.

Our explanation of $L$ thus far was designed to gain intuition, and as such, was rather informal. We now proceed a bit more formally, following \cite{maclagan2015introduction}, in order to properly define it. 

\textbf{Lineality space.} The lineality space $L_n$ results from the homogeneity of the Pl\"ucker ideal $\mathcal{I}(2,n)$ with respect to the $\mathbb{Z}^m$-grading given by $\deg(\plucker_{ij})=e_i+e_j \in \mathbb{Z}^m$. To see this, consider an arbitrary Pl\"ucker relation indexed by $\{i,j,k,l \}$. Then $\deg(\plucker_{ij}\plucker_{kl})=(e_i+e_j)+(e_k+e_l)=e_i+e_j+e_k+e_l$ is the same for each term in the relation. Thus, for the linear map 
\[
\phi:\mathbb{R}^n \rightarrow \mathbb{R}^{\binom{n}{2}}, \quad (\vecOne_1,\ldots,\vecOne_n) \mapsto (\vecOne_i+\vecOne_j)_{1\leq i<j\leq n},
\]
we get that for any vector $\tilde{\plucker}\in \tropGr(2,n)$, if we set $\tilde{\plucker}':=\tilde{\plucker}+\phi(\vecOne)$, then $\plucker'_{ij}=\plucker_{ij}+(\vecOne_i+\vecOne_j)$ in each coordinate entry. The corresponding \emph{tropical} Pl\"ucker relations for $\tilde{\plucker}'$ have terms that look like $\plucker'_{ij}+\plucker'_{kl}=(\plucker_{ij}+\vecOne_i+\vecOne_j)+(\plucker_{kl}+\vecOne_k+\vecOne_l)=(\plucker_{ij}+\plucker_{kl})+(\vecOne_i+\vecOne_j+\vecOne_k+\vecOne_l)$. Clearly, all terms in the relation are shifted by the same additive constant so that the maxima remain unchanged. Since this is true of every relation, $\operatorname{Trop(X)}$ (the tropical variety) is invariant under translation by $\phi(\vecOne)$. In other words, $L$ is the image of the linear map $\phi$, i.e., 
\begin{equation} \label{eq:lineality_space}
L_n=\lin(\tropGr(2,n))=\mathrm{im}(\phi)
=\textup{span}\Bigl\{\sum_{j\neq i} e_{ij}\ :\ 1\le i\le n\Bigr\}
\subseteq \R^{\binom n2}. 
\end{equation} 
Moreover, under the standard identification of $\tropGr(2,n)$ with the space of tree metrics on $[n]$,
the lineality spaces agree:
\[
\lin(\Treespace_n)=\lin(\tropGr(2,n)) 
\]
Here $\Treespace_n\subseteq \R^{\binom{n}{2}}$ denotes the space of (phylogenetic) tree metrics on the leaf
set $[n]=\{1,\dots,n\}$, with coordinates indexed by pairs $(i,j)$, $1\le i<j\le n$.
Examining \cref{eq:lineality_space}, the vector $\sum_{j\neq i}e_{ij}$ is the characteristic vector of the
set of edges incident to node $i$ in the complete graph $K_n$.  Equivalently, $L_n$ is the row space of the
node-edge incidence matrix of $K_n$.  Under the tropicalization map, this is precisely the pattern of
coordinates modified by rescaling the $i$th row of $V$.  Intuitively, these perturbations correspond to
changing the pendant length at leaf $i$ while leaving the underlying tree topology unchanged.

Note that each of these characteristic vectors also corresponds to the edges of a cut-sets given by the partition $\{i\}\uplus \{[\numleaves]\setminus i\}$. Returning to our matroidal perspective, these sets of edges form what are called \emph{cocircuits} of the matroid $M(K_\numleaves)$ and (by definition) are circuits of the dual matroid $M(K_\numleaves)^*$. This highlights a duality relation for graphical matroids that is well-known in network optimization.

We are now ready to state an important decomposition of tree space, which allows us to consider any tree as the sum of an ultrametric and an element from the lineality space.
\begin{lemma}
[{\cite[Lemma 4.3.9]{maclagan2015introduction}}]
\label{lem:decomposition}
    Every tree metric $\dvec \in \TPT{\binom{\numleaves}{2}}$ is an ultrametric $u\in \Uspace_\numleaves$ plus a vector in the lineality space $L_\numleaves$. Thus, the space of phylogenetic trees has the decomposition
    \begin{equation}
        \Treespace_\numleaves=\operatorname{trop}(M(K_\numleaves)) + L_\numleaves.
    \end{equation}
\end{lemma}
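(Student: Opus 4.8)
The plan is to prove the two inclusions $\operatorname{trop}(M(K_\numleaves)) + L_\numleaves \subseteq \Treespace_\numleaves$ and $\Treespace_\numleaves \subseteq \operatorname{trop}(M(K_\numleaves)) + L_\numleaves$ separately, with essentially all the content in the second. The first inclusion is immediate from what we have established: every ultrametric satisfies the four-point condition and is therefore a tree metric, so $\Uspace_\numleaves = \operatorname{trop}(M(K_\numleaves)) \subseteq \Treespace_\numleaves = \tropGr(2,\numleaves)$; and since $L_\numleaves$ is by construction the lineality space of $\tropGr(2,\numleaves)$, the fan is invariant under translation by $L_\numleaves$, so $u + \ell \in \Treespace_\numleaves$ for every $u \in \operatorname{trop}(M(K_\numleaves))$ and $\ell \in L_\numleaves$.

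For the main inclusion I would argue constructively. Given a tree metric $\dvec$, realize it by a weighted tree $T$ and choose an arbitrary root $\treeroot$ (a vertex, or a point subdividing an edge). Let $h_i$ be the distance from $\treeroot$ to leaf $i$ in $T$, and let $m_{ij}$ be the distance from $\treeroot$ to the meet (most recent common ancestor) of $i$ and $j$. Since the path in $T$ from $i$ to $j$ climbs from $i$ to this meet and descends to $j$,
\[
\dvec_{ij} = h_i + h_j - 2\,m_{ij}.
\]
Fix any height $H$, set $w_i := h_i - H$, and let $\ell := \phi(w) \in L_\numleaves$, so that $\ell_{ij} = w_i + w_j$; this is exactly the pendant-edge rescaling that resets every leaf to depth $H$. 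Then $u := \dvec - \ell$ satisfies $u_{ij} = 2\,(H - m_{ij})$, and it remains to show $u$ is an ultrametric.

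The crux is the three-point (ultrametric) condition for $u$. Because $u_{ij} = 2(H - m_{ij})$ is an order-reversing affine function of $m_{ij}$, requiring that the maximum of $\{u_{ij}, u_{ik}, u_{jk}\}$ be attained at least twice is equivalent to requiring that the minimum of $\{m_{ij}, m_{ik}, m_{jk}\}$ be attained at least twice. This is a structural fact about rooted trees: among any three leaves, two of the three pairwise meets coincide with the common meet of all three, which is the shallowest, so the minimal meet-depth is attained at least twice. I would establish it by a short case analysis on how $i,j,k$ branch relative to $\treeroot$. This gives $u \in \Uspace_\numleaves = \operatorname{trop}(M(K_\numleaves))$, and since $\dvec = u + \ell$ with $\ell \in L_\numleaves$, the lemma follows. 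Throughout I work in $\TPT{\numpairs}$, so any negative pendant lengths appearing in $u$ are harmless: only the (translation-invariant) tree-metric relations matter.

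I expect the main obstacle to be the careful bookkeeping behind the identity $\dvec_{ij} = h_i + h_j - 2\,m_{ij}$ together with the meet-depth three-point condition, particularly when $\treeroot$ lies in the interior of an edge or when $T$ has polytomies; once that identity is in place, the remainder is formal.
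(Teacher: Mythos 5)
Your proof is correct and follows essentially the same route as the cited source (the paper states the lemma without proof, deferring to \cite[Lemma~4.3.9]{maclagan2015introduction}): root the tree, use the lineality-space vector $\phi(w)$ with $w_i = h_i - H$ to reset all leaves to a common depth, and observe that $u_{ij} = 2(H - m_{ij})$ is an ultrametric because among any three leaves the minimal meet depth is attained at least twice --- exactly the pendant-edge (Farris-transform) argument the paper alludes to when it says lineality perturbations ``correspond to changing the pendant length at leaf $i$.'' The one subtlety, possible negative entries of $u$ for small $H$, is handled correctly by your observation that you work in $\TPT{\numpairs}$ where the three-point condition is translation-invariant (equivalently, choose $H \ge \max_{i<j} m_{ij}$).
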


\begin{remark}
The decomposition given in \Cref{lem:decomposition} suggests that sampling from $\Treespace$ can be accomplished by sampling separately from both components of the decomposition and then adding them in a manner that returns a vector in the positive orthant. Since the ultrametric portion determines the clade partitions, one can think of this component as fixing the tree's topology while the lineality space adjusts particular branch lengths without modifying clade groupings. In particular, we can modify \Cref{alg:pAdic_valuation} by requiring that $\vecTwo=\ones$. This results in \Cref{alg:pAdic_valuation} sampling ultrametrics, and can provide some computational advantages.
\end{remark}
\vspace{4pt}
\begin{remark}[Distribution of $2$-adic valuations of random differences and its effect on tree shape]
Let $K\in\mathbb{N}$ and let $\vecOne,\vecTwo$ be independent and uniform on $\{0,1,\dots,2^K\!-\!1\}$. Set $D\equiv \vecOne-\vecTwo\pmod{2^K}$. Since subtraction is a bijection of the finite abelian group $\mathbb{Z}/2^K\mathbb{Z}$, the random variable $D$ is uniform on $\mathbb{Z}/2^K\mathbb{Z}$. For $0\le k<K$, the event $\val_2(D)=k$ means that $D$ is divisible by $2^k$ but not by $2^{k+1}$; these are exactly the residues $D=2^k(2m+1)$ with $0\le m<2^{K-k-1}$. Hence
\[
Pr\bigl(\val_2(D)=k\bigr)\;=\;\frac{2^{K-k-1}}{2^K}\;=\;2^{-(k+1)},
\qquad k=0,1,\dots,K-1,
\]
and $\mathbb{P}\bigl(\val_2(D)\ge K\bigr)=2^{-K}$. Thus $\val_2(D)$ is a geometric distribution truncated at $K$. 

If instead $u,v$ are sampled independently and uniformly from $\{1,\dots,M\}$ with $M$ large, then residue classes modulo $2^k$ are approximately equidistributed; in particular, for fixed $k$ with $2^k\ll M$,
\[
\mathbb{P}\bigl(\val_2(u-v)=k\bigr)\;\approx\;2^{-(k+1)}.
\]
Consequently, for a {\em single} pair $(i,j)$ of leaves, the marginal distribution of the $2$-adic dissimilarity $\dvec_{ij}:=\val_2(y_i-y_j)$ is (approximately) geometric. Across many pairs these values are \emph{not} independent, but there is still a strong tendency for $d_{ij}$ to take only a few small integer values. This creates large blocks of exact ties among quartet sums in the four-point test, which \emph{on average} promotes more balanced topologies (many clades coalescing at the same level). 

It is important to note that when we say that pairwise distances are geometric, this refers only to the marginal distribution of $\val_2(y_i-y_j)$ for a fixed pair. These values are not independent, and must satisfy the constraints of a tree metric.
\end{remark}
\vspace{4pt}
\begin{example}[$2$-adic ladder (caterpillar) from powers of $2$]
Let $y_i=2^{\,i-1}$ for $i=1,\dots,n$ and define $\dvec_{ij}:=\val_2(y_j-y_i)$ for $i<j$. Then
\[
y_j-y_i \;=\; 2^{i-1}\bigl(2^{\,j-i}-1\bigr),\qquad\text{with }2^{\,j-i}-1\text{ odd},
\]
so $\val_2(y_j-y_i)=i-1$. In particular,
\[
\dvec_{12}=0,\quad \dvec_{23}=1,\quad \dvec_{34}=2,\;\dots,\; \dvec_{(n-1)n}=n-2,
\]
and more generally $\dvec_{ij}=i-1$ for every $i<j$. The $2$-adic ultrametric inequality
\(
\val_2(x-z)\ge \min\{\val_2(x-y),\val_2(y-z)\}
\)
is satisfied, and the hierarchical pattern is strictly nested: the pair $\{n-1,n\}$ coalesces deepest, then $\{n-2,\{n-1,n\}\}$, and so on. The resulting rooted tree is a maximally imbalanced \emph{caterpillar} (a single backbone with leaves attaching one by one at increasing depth). 
\end{example}

\section*{Sampling over $\tropGr(2,n)$}

Numerous methods exist for sampling from tree space, yet vanishingly few can generate samples that are uniformly distributed across the entire space. Tree space is typically characterised as ranging between two structural extremes: the star tree and the caterpillar (ladder) tree. Perhaps the simplest approach is the Proportional to Distinguishable Arrangements (PDA) model, which assumes each labelled topology is equally likely. Representing trees via a bijection to integer vectors (e.g. \cite{Penn2024-ja}), one can equivalently view PDA as uniform sampling in that vector space. Maximum-likelihood tree inference \cite{Minh2020-ey} does not explicitly define a prior on topologies; however, in a Bayesian interpretation, ML is equivalent to maximum a posteriori estimation under a uniform (PDA) prior. PDA produces trees that are highly unbalanced on average, but still under-represents extremely caterpillar-like structures.

Model-based formulations such as the Yule (pure-birth) process \cite{Yule1925-de} and the general birth–death process \cite{Kendall1948-ht} provide well-defined probability distributions over trees and are commonly used to sample random phylogenies. These models tend to generate more balanced tree shapes than PDA, but newer age-dependent generalisations \cite{Andersen2025-rb} allow interpolation across a broad spectrum of shapes, from highly unbalanced to highly balanced trees. Nonetheless, these processes do not induce uniform sampling over tree space.

Tropical tree space offers new avenues for exploring and sampling phylogenetic tree distributions. From our basic understanding of tree space itself, we want to consider collections of trees, i.e., multiple points in the tropical Grassmannian. For this, we will primarily focus on the specific case of ultrametric (or equidistant) trees, the reasons for which will soon be made clear. For now, let us simply use $\Uspace\subset\Treespace$ to denote the \emph{space of ultrametrics} or, equivalently, the \emph{space of equidistant trees}. In the notation of \Cref{lem:decomposition} we have that 
\[
\Uspace_n \;\cong\; \bergFan(M(K_n)) \;=\; \operatorname{trop}(M(K_n)),
\]
where $\bergFan$ is called the \emph{Bergman fan} of the matroid $M(K_n)$. The Bergman fan of a matroid is the tropical linear space associated with that matroid. 

A common question is how to define a measure of closeness between two trees. The Billera–Holmes–Vogtmann (BHV) distance is one of the most widely used approaches where the distance is both mathematically sound and biologically meaningful; however, its computation is costly and does not always scale well to large datasets \cite{Owen2011-sm}. The Robinson–Foulds (RF) distance \cite{Robinson1981-nd}, by contrast, is computationally cheap and widely used in practice, but it only captures topological differences (the presence or absence of splits) and ignores branch lengths, which limits its sensitivity. RF distance also saturates quickly and trees can attain a maximal normalised distance of one, despite being quite similar.  The Subtree-Prune and Regraft (SPR) distance is another alternative that reflects the minimal number of topological rearrangements needed to transform one tree into another, which ties directly to biologically meaningful evolutionary processes; however, it is NP-hard to compute exactly and often requires heuristics, reducing its practicality for very large trees. 

\subsubsection*{The Tropical Metric (Generalized Hilbert Projective Metric)}

The tropical Grassmannian induces a natural notion of distance between trees via the \emph{tropical metric}. 
Given two vectors \( \vecOne, \vecTwo \in \R^n \), the tropical distance is defined as
\begin{equation}
\dtr(\vecOne,\vecTwo) \;=\; \max_{i}\,(\vecOne_{i}-\vecTwo_{i}) \;-\; \min_{i}\,(\vecOne_{i}-\vecTwo_{i}).
    \label{eq: hilbert_metric}
\end{equation}
Intuitively, this distance measures the range of coordinate-wise differences between $\vecOne$ and $\vecTwo$. It is also easy to see that it is invariant under adding a constant vector to either vector, which we should expect since $\vecOne$ and $\vecOne+\shift\ones$ represent the same point in $\tropGr(2,n)$. Note, however, that $\dtr$ is \emph{not} invariant by adding \emph{any} vector from $L$. In other words, perturbing $\vecOne$ or $\vecTwo$ by an element of $L$ will, in general, change $\dtr(\vecOne,\vecTwo)$. 

Tropical distance also respects the combinatorial structure of tree space: small topological changes correspond to small tropical distances. From a biological perspective, the tropical distance between two ultrametric trees has a natural interpretation. 
If two trees, $T_1,\,T_2$, are represented by their cophenetic vectors, $\dvec^{(1)},\,\dvec^{(2)}$, where entry $\dvec^{(\cdot)}_{ij}$ records the divergence time of taxa $i$ and $j$, 
then the coordinate-wise differences $\dvec^{(1)}_{ij} - \dvec^{(2)}_{ij}$ measure how much earlier or later each pair of taxa coalesces in one tree relative to the other. Thus this captures the \emph{spread} of these discrepancies: it is the gap between the clade whose divergence is shifted the most earlier and the clade whose divergence is shifted the most later across the two trees. 
Because this metric is invariant under adding a constant to all entries, it ignores uniform shifts in divergence times (e.g.\ due to calibration), and instead reflects the relative rescaling of divergence events across lineages. 
In this sense, the tropical distance quantifies the worst-case disagreement in relative evolutionary timing, emphasizing how unevenly the two trees stretch or compress different parts of their histories.

\paragraph{Tropical Line Segments between Ultrametric Trees.}

Consider two ultrametrics $\vecOne,\vecTwo$ that are both finite, i.e., $\vecOne,\,\vecTwo \in \left\{ \Uspace_\numleaves \cap \TPT{\numpairs} \right\}$ (recall that $m=\binom{\numleaves}{2}$). The \emph{tropical line segment} between $\vecOne, \vecTwo \in \TPT{\numpairs}$ is defined as
\begin{equation}\label{eq:trop_line_seg}
    \Gamma_{\vecOne,\vecTwo} \;=\; \left\{ \ell \odot \vecOne \oplus \vecTwo \;\middle|\; \ell \in [\,\min(\vecTwo-\vecOne), \max(\vecTwo-\vecOne)\,] \right\}.
\end{equation}

Tropical line segments exist for any two points in $\TPT{\numpairs}$, which includes the finite elements of $\tropGr(2,n)$ (i.e., cophenetic vectors of trees with finite pendant edge lengths). However, for ultrametrics (and by extension for cophenetic vectors of equidistant trees), they acquire particularly nice properties. For starters, note that for any choice of $\ell$ the resulting point in $\TPT{\numpairs}$ - remains an ultrametric. That is $\Gamma_{\vecOne,\vecTwo}\subset \Uspace_n$. Thus, by interpolating among ultrametric trees, we are guaranteed to remain in (ultrametric) tree space. The reasons for this stems from the fact that $\Uspace_n$ is actually a \emph{tropical linear space} and is therefore \emph{tropically convex}, see \cite{speyer2008tropLinearSpaces, lin2017convexity}. Thus, our tropical line segment (anchored by points in a tropically convex set) is simply the tropical analogue of what we are used to when dealing with classical convexity.

The interval $[\min(\vecTwo-\vecOne), \, \max(\vecTwo-\vecOne)]$ is exactly the set of scalar shifts $\ell$ such that the coordinates of $\ell \odot \vecOne \oplus \vecTwo$ interpolate between $\vecOne$ and $\vecTwo$ in the tropical sense.  
Hence, the tropical line segment $\Gamma_{\vecOne,\vecTwo}$ is parameterized by $\ell$ in this interval. From this it becomes simple to sample trees between two anchor trees using \cref{eq:trop_line_seg}. We see from Figure \ref{fig:HAR sampling} that we can sample two ultrametric trees from a coalescent process and adjusting $\ell$ in Algorithm \cref{eq:trop_line_seg} allows us to smoothly interpolate between these two trees. As expected, the tropical distance is perfectly linear between these two trees across the line segment, but this distance is closely correlated with BHV, SPR and RF distances too. 

\begin{figure}[h]
    \centering
    \includegraphics[width=1\linewidth]{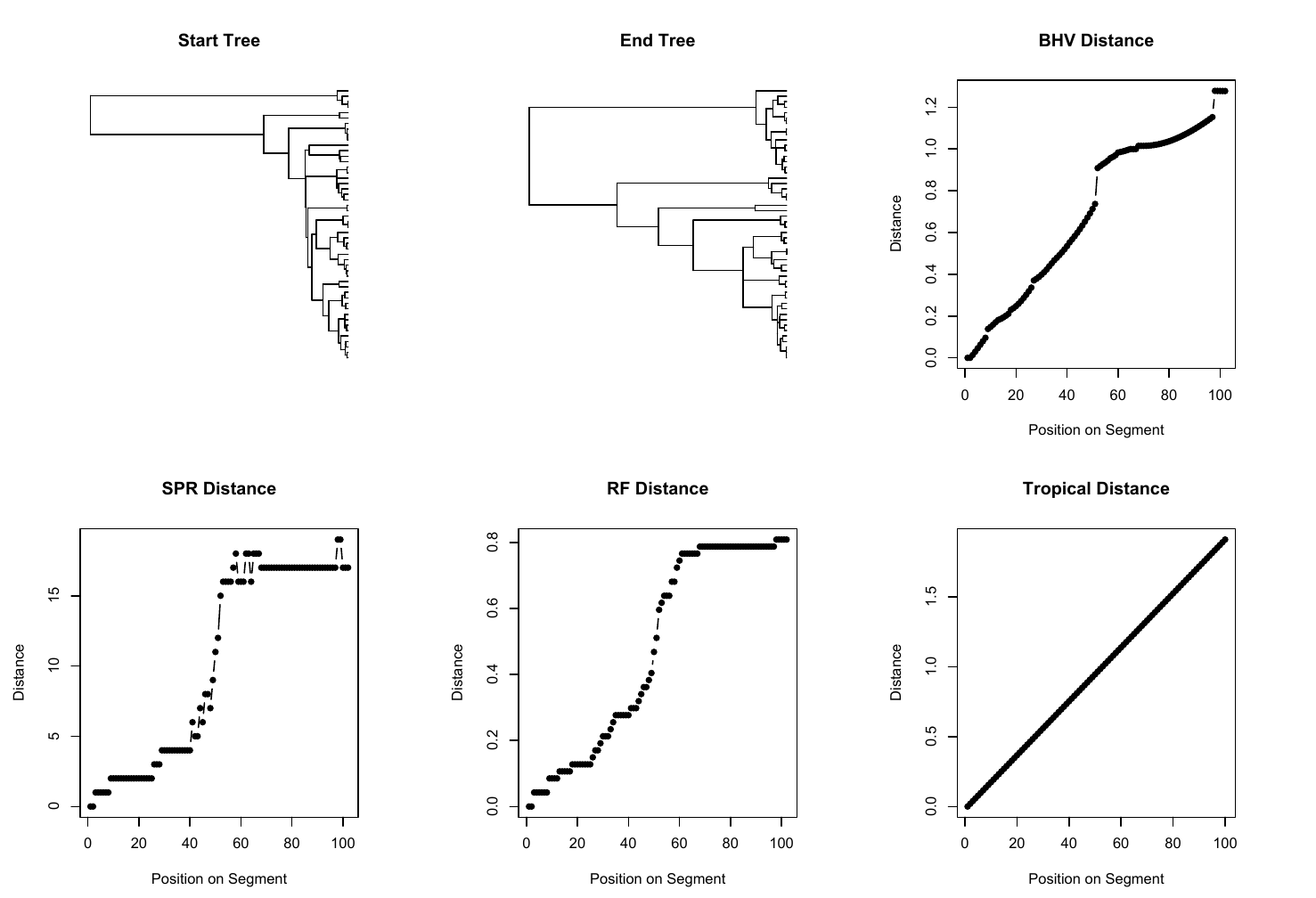}
    \caption{Comparison of start and end trees with corresponding distance trajectories.
The first panel shows the initial phylogenetic tree (left) and the final phylogenetic tree (right) inferred by UPGMA. The subsequent four panels display the evolution of tree distances across the sampled path, measured using four different metrics: Billera–Holmes–Vogtmann (BHV) distance, subtree prune-and-regraft (SPR) distance, Robinson–Foulds (RF) distance, and the tropical distance. Together, these plots illustrate both the structural differences between the start and end trees and how these distances vary along the interpolation.}
    \label{fig:HAR sampling}
\end{figure}

In \Cref{fig:HAR sampling} we demonstrate this \textit{smoothness} by constructing tropical line segments between ultrametric trees and tracking four alternative distances (BHV, SPR, RF, and tropical). We (i) verify that interpolation by $\ell\odot \vecOne \oplus \vecTwo$ remains within the ultrametric tree space, and (ii) observe that the tropical distance varies \emph{exactly linearly} along the path reflecting its tropical projective invariance. By contrast, BHV changes smoothly with occasional slope/jump transitions and SPR/RF update in discrete steps. 

\paragraph{The Tropical Convex Hull of a Set of Ultrametric Trees}

The fact that $\Uspace$ is closed when taking tropical line segments naturally extends beyond just pairs of ultrametrics. For a set of ultrametrics, we can consider their \emph{tropical convex hull}, the tropical analogue of classic convex hulls. Just like for tropical line segments, any set of ultrametric trees has its tropical convex hull also contained in $\Uspace$. 
Given a set of ultrametrics $\mathcal{S}=\{u^{(1)},\dots,u^{(k)}\} \in \TPS{m-1}$, their tropical convex hull $\operatorname{tconv}$ is defined as
\begin{equation}\label{eq:tconv}
\operatorname{tconv}\bigl(u^{(1)},\dots,u^{(k)}\bigr) 
   \;=\; \Bigl\{ \;\bigoplus_{j=1}^{k} \lambda_{j} \odot u^{(j)} 
        \;\Big|\; \lambda_{1},\dots,\lambda_{k} \in \Tmax\;\Bigr\}.
\end{equation}

Thus we can perform such ``interpolation'' on $\mathcal{S}\subset\Uspace$ without ever leaving (ultrametric) tree space. Any point $w \in \operatorname{tconv}(u^{(1)},\dots,u^{(k)})$ has coordinates
\[
w_{i} \;=\; \max_{j=1,\dots,k}\,\bigl(u^{(j)}_{i}+\lambda_{j}\bigr), 
\qquad i=1,\dots,\tbinom{n}{2}.
\]
The tropical convex hull is the smallest tropically convex set 
containing all of the input trees, and generalizes the tropical line segment 
to higher dimensions. Sampling within $\operatorname{tconv}$ thus provides
a way to interpolate among several anchor trees at once, always producing
valid ultrametric trees inside $\Uspace_n$.

Tropical convex hulls are \emph{tropical polytopes} in the tropical projective torus $\TPS{\numpairs-1} := \R^{\numpairs}/\R\mathbf{1}$. In fact, Ardila showed in \cite{Ardilasubdominant} that $\Uspace_n$ \emph{is} a tropical polytope. Note that Ardila’s result is formulated in \emph{tropical projective space} in the sense of
\[
\bigl((\R\cup\{-\infty\})^{\numpairs} \setminus \{(-\infty,\dots,-\infty)\}\bigr)
/ \R\mathbf{1},
\]
meaning, some of the extremal generators (vertices) of  $\Uspace_n$ may have
coordinates equal to $-\infty$ and therefore lie outside our torus
$\TPS{\numpairs-1}$. In this paper we adopt the convention
$\TPS{\numpairs-1} := \R^{\numpairs}/\R\mathbf{1}$ and implicitly restrict to the finite part of $\Uspace_n$, i.e.\ its intersection with the tropical projective torus; this does not affect the combinatorial structure of the tropical polytope.
These vertices of $\Uspace_n$ can be computed directly from
$\operatorname{trop}(M(K_n))$ as the image under the valuation map of the
maximal proper \emph{flats} of $M(K_n)$. By enumerating all such maximal
flats, it is (in theory\footnote{We say ``in theory'' because any algorithm for enumerating flats for $M(K_n)$ has a worst-case running time that is exponential in the input size \cite{oxley2006matroid}}, and so this quickly becomes infeasible for trees with many leaves) possible to make explicit the tropical polytope (tropical convex hull) of $\Uspace_n$.

\paragraph{Tropical Projection.} Given a set $\mathcal{S}$ of ultrametrics as before, suppose we have $w\notin\Uspace$ and want to find the closest (in the sense of the tropical metric) ultrametric to $w$ that is still contained in $\operatorname{tconv}(\mathcal{S})$. We use $\mathcal{P}$ to denote the tropical polytope, i.e. $\mathcal{P}=\operatorname{tconv}(\mathcal{S})$. The \emph{tropical projection} of $w \in \TPS{\numpairs-1}$ onto $\mathcal{P}$ is

\begin{equation}\label{eq:trop_proj}
    \pi_{\mathcal{P}}(w)
    \;=\;
    \bigoplus_{i=1}^{k} \lambda_i \odot u^{(i)},
    \qquad
    \lambda_i
    \;=\;
    \max\{\lambda\in\mathbb{R}\mid \lambda\odot u^{(i)} \le w\}
    \;=\;
    \min_{j}\bigl(w_j - u^{(i)}_j\bigr).
\end{equation}

\subsection*{Tropical Hit and Run}

Smooth interpolation along tropical line segments anchored by ultrametric trees enables the use of Markov Chain Monte Carlo techniques by sampling along tropical line segments as a subroutine. Such a scheme was proposed in \cite{Yoshida2023} under the name \emph{tropical hit and run} (HAR), where it was applied to both arbitrary tropical polytopes and the space of ultrametrics. The essence of the approach is as follows. Given a tropical polytope $\operatorname{trop}(X)$ and an initial point $x\in \operatorname{trop}(X)$, sample another point $y\in \operatorname{trop}(X)$ via some predetermined method. With $x$ and $y$, a point $z$ is sampled uniformly from the tropical line segment $\Gamma_{x,y}$ as the next proposed move. The algorithm then either accepts $z$, setting $x\leftarrow z$, or rejects it, keeping $x$ unchanged. The procedure then repeats until termination. The output is a subset of accepted moves encountered throughout the course of the algorithm.

The primary challenge in implementing tropical HAR is in the selection of $y$. For arbitrary $\operatorname{trop}(X)$ the authors of \cite{Yoshida2023} propose sampling from among the tropical vertices, or alternatively, sampling from some Euclidean space followed by tropically projecting $y$ onto $\operatorname{trop}(X)$. Both of these approaches have drawbacks, which we now discuss. The former method requires explicit representation of the vertices of $\operatorname{trop}(X)$, which can be prohibitive for trees with many leaves. In particular, it requires enumerating the cocircuits\footnote{Briefly, every matroid $M$ has a dual matroid $M^*$ over the same ground set. Circuits of $M$ correspond to \emph{cocircuits} of $M^*$ and vice versa. We use cocircuits in our earlier statement, as these are in fact the complement of (and therefore combinatorially equivalent to) the maximal flats in $M$.} of $M(K_n)$, of which there are $2^{(n-1)}-1$. Furthermore, tropical line segments with one endpoint at a vertex of $\operatorname{trop}(X)$ tends to result in a considerable portion of the samples coming from cones of the Bergman fan that are not of maximum possible dimension, i.e. ultrametrics corresponding to non-binary (unresolved) topologies.

The latter method, which performs tropical projection, also requires explicit vertex representation in the general case. Fortunately for us, specialized algorithms exist that bypass the need to explicitly compute all such vertices. In particular, Ardila showed in \cite{Ardilasubdominant} that single-linkage hierarchical clustering coincides with $\pi_{M(K_n)}$ up to translation by $\ones$, providing us a polynomial-time algorithm for projecting onto $\operatorname{trop}(M(K_n))$ without the need for explicit tropical vertices. Even so, distributions associated with sampling over a euclidean space do not translate once tropical projection is applied because such a mapping is not injective.  The overall procedure is outlined in \cref{algo:trop_HAR_ultra}, where we use $\mathcal{P}$ to denote the tropical polytope corresponding to $\Uspace_n$.

\begin{algorithm}
\caption{Tropical Hit and Run (HAR) over ultrametric tree space $\Uspace_\numleaves$ \cite{Yoshida2023}}
\label{algo:trop_HAR_ultra}
\begin{algorithmic}[1]
    \Require Desired sample size $N$ and initial point $x_0 \in \Uspace_\numleaves$.
    \Ensure $\mathcal{S}=\{u^{(1)},\ldots,u^{(N)} \} \subset \Uspace_\numleaves$.
    \State Initialize $\mathcal{S}\leftarrow \emptyset$, $x\leftarrow x_0$, and $k\gets 1$.
    \While{$k\leq N$}
        \State Sample a representative $y \in \R^{\numpairs}$ of a point in $\R^{\numpairs}/\R\mathbf{1}$,
               where $\numpairs=\binom{\numleaves}{2}$.
        \State Compute $\tilde{y}\gets \pi_\mathcal{P}(y)$ (e.g.\ via single-linkage projection onto $\Uspace_\numleaves$).
        \State Draw $\ell$ uniformly from $\bigl[\min_i(\tilde y_i-x_i),\,\max_i(\tilde y_i-x_i)\bigr]$.
        \State \textbf{Sample from $\Gamma(x,\tilde y)$:} Set $u^{(k)} \gets \ell \odot x \oplus \tilde y$
               (a point in $\R^{\numpairs}/\R\mathbf{1}$, cf. \Cref{lemma:welldefined}).
        \If{the algorithm \emph{accepts} $u^{(k)}$}  
            \State $\mathcal{S}\leftarrow \mathcal{S}\cup\{u^{(k)}\}$,\quad $x\leftarrow u^{(k)}$,\quad $k\leftarrow k+1$
        \EndIf
    \EndWhile
    \State \Return $\mathcal{S}$
\end{algorithmic}
\end{algorithm} 
\begin{lemma} \label{lemma:welldefined}[Well-definedness on the tropical projective torus]
Let $\numpairs = \binom{\numleaves}{2}$ and let
\[
\TPS{\numpairs-1} := \R^{\numpairs}/\R\ones
\]
be the tropical projective torus. In Algorithm~\ref{algo:trop_HAR_ultra}, fix
points $x, \tilde y \in \TPS{\numpairs-1}$ and choose representatives
$x,\tilde y \in \R^{\numpairs}$. For any
\[
\ell \in \bigl[\min_i(\tilde y_i-x_i),\,\max_i(\tilde y_i-x_i)\bigr],
\]
define
\[
u := \ell \odot x \oplus \tilde y \;\in\; \R^{\numpairs}.
\]
Then the class of $u$ in $\TPS{\numpairs-1}$ is independent of the choice of
representatives for $x$ and $\tilde y$. In particular, the proposal
$u^{(k)} = \ell \odot x \oplus \tilde y$ in
Algorithm~\ref{algo:trop_HAR_ultra} is well defined as a point of
$\TPS{\numpairs-1}$.
\end{lemma}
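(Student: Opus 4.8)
The plan is to unwind the tropical operations into ordinary arithmetic and then track precisely how the proposal transforms under a change of representatives. Writing $\odot = +$ and $\oplus = \max$ coordinatewise (cf.\ \cref{eq:MaxPlus_Algebra}), the proposal $u = \ell \odot x \oplus \tilde y$ has entries
\[
u_i = \max(\ell + x_i,\ \tilde y_i), \qquad i = 1,\dots,\numpairs,
\]
which is a genuine vector in $\R^{\numpairs}$ because both endpoints are finite. Two representatives of the same class in $\TPS{\numpairs-1}$ differ by a constant vector, so it suffices to take $x' = x + a\ones$ and $\tilde y' = \tilde y + b\ones$ for arbitrary $a,b \in \R$ and to show that the resulting proposal again lies in the class $[u]$.

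The one point that needs care, and which I expect to be the only genuine obstacle, is that the admissible interval for $\ell$ is itself defined through the chosen representatives. Since $\tilde y'_i - x'_i = (\tilde y_i - x_i) + (b-a)$, both the minimum and the maximum shift by $b-a$, so the interval $\bigl[\min_i(\tilde y_i - x_i),\ \max_i(\tilde y_i - x_i)\bigr]$ translates rigidly by $b-a$. Consequently the parameter selecting the corresponding point of the segment under the new representatives is not $\ell$ but $\ell' := \ell + (b-a)$, which indeed lies in the translated interval. Holding $\ell$ fixed while shifting the representatives would in general push it outside the new interval; recognizing that $\ell$ must be re-indexed together with the representatives is the crux, after which everything reduces to a one-line substitution.

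With this correspondence I would substitute directly. Using $\ell' + x'_i = \ell + (b-a) + x_i + a = (\ell + x_i) + b$ and $\tilde y'_i = \tilde y_i + b$, the new proposal has entries
\[
u'_i = \max(\ell' + x'_i,\ \tilde y'_i) = \max(\ell + x_i,\ \tilde y_i) + b = u_i + b,
\]
so $u' = u + b\ones$ and hence $[u'] = [u]$ in $\TPS{\numpairs-1}$, which is the assertion. As consistency checks I would note that the endpoints behave correctly: at $\ell = \min_i(\tilde y_i - x_i)$ one has $\ell + x_i \le \tilde y_i$ for all $i$, giving $u = \tilde y$, while at $\ell = \max_i(\tilde y_i - x_i)$ one has $\ell + x_i \ge \tilde y_i$ for all $i$, giving $u = x + \ell\ones \equiv x$, so the segment runs between $[x]$ and $[\tilde y]$ as in \cref{eq:trop_line_seg}. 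Finally, since the interval length $\max_i(\tilde y_i - x_i) - \min_i(\tilde y_i - x_i)$ is invariant and the re-indexing $\ell \mapsto \ell + (b-a)$ is a rigid translation, the uniform draw of $\ell$ in \cref{algo:trop_HAR_ultra} also descends to a well-defined distribution on the torus; only the well-definedness of the point is needed for the lemma itself.
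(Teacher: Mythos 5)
Your proof is correct, and it is in fact strictly more careful than the paper's own. The paper's proof posits a \emph{single} $\alpha\in\R$ with $x'=x+\alpha\ones$ and $\tilde y'=\tilde y+\alpha\ones$, i.e.\ a simultaneous shift of both representatives; under that convention the $\ell$-interval is literally unchanged, the same fixed $\ell$ is admissible, and one line gives $u'=u+\alpha\ones$. You instead allow the fully general change of representatives $x'=x+a\ones$, $\tilde y'=\tilde y+b\ones$ with $a\neq b$, observe that the admissible interval translates rigidly by $b-a$, and re-index $\ell'=\ell+(b-a)$ to obtain $u'=u+b\ones$. This is not a cosmetic difference: since representatives of two \emph{distinct} points may be shifted independently, the paper's ``there exists $\alpha$'' step does not cover all representative changes, and with $a\neq b$ and $\ell$ held fixed the conclusion genuinely fails --- e.g.\ for $\numpairs=2$, $x=(0,0)$, $\tilde y=(0,1)$, $\ell=0$ gives $[u]=[\tilde y]$, while $x'=x+\ones$, $\tilde y'=\tilde y$ keeps $\ell=0$ admissible yet gives $[u']=[x]\neq[\tilde y]$. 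So the lemma as literally stated is only true under the common-shift convention the paper tacitly adopts, and your re-indexing of $\ell$ is the correct repair in the general case. What each approach buys: the paper's argument is a minimal verification sufficient for \cref{algo:trop_HAR_ultra} as implemented, where representatives are fixed once per iteration; yours additionally shows that the segment, its parameterizing interval, and the uniform draw of $\ell$ all descend coherently to $\TPS{\numpairs-1}$, which is what the algorithm's proposal distribution actually requires. Your endpoint checks ($u=\tilde y$ at $\ell=\min_i(\tilde y_i-x_i)$ and $u\equiv x$ at $\ell=\max_i(\tilde y_i-x_i)$) are also correct and consistent with \cref{eq:trop_line_seg}.
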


\begin{proof}
Let $x',\tilde y' \in \R^{\numpairs}$ be another choice of representatives for
the same points in $\TPS{\numpairs-1}$. Then there exists $\alpha\in\R$ such
that
\[
x' = x + \alpha\ones,
\qquad
\tilde y' = \tilde y + \alpha\ones.
\]
For each coordinate $i$ we have
\[
\tilde y'_i - x'_i
= (\tilde y_i + \alpha) - (x_i + \alpha)
= \tilde y_i - x_i,
\]
so the interval
\[
\bigl[\min_i(\tilde y_i-x_i),\,\max_i(\tilde y_i-x_i)\bigr]
\]
is the same for $(x,\tilde y)$ and $(x',\tilde y')$. In particular, a given
choice of $\ell$ in this interval is valid for both pairs of representatives.

Using max-plus notation, we compute
\[
\ell \odot x' \oplus \tilde y'
= \max(x'+\ell,\tilde y')
= \max(x+\alpha\ones+\ell,\tilde y+\alpha\ones)
= \max(x+\ell,\tilde y) + \alpha\ones
= \bigl(\ell\odot x \oplus \tilde y\bigr) + \alpha\ones.
\]
Thus the proposal $u'$ obtained from $(x',\tilde y')$ satisfies
$u' = u + \alpha\ones$, so $u$ and $u'$ define the same class in
$\TPS{\numpairs-1} = \R^{\numpairs}/\R\ones$. Therefore the point $u$ is
well defined on the quotient, and therefore it is independent of the choice of representatives for $x$ and $\tilde y$.
\end{proof}

\begin{remark}
\Cref{algo:trop_HAR_ultra} can be generalized to sample from any tropical polytope $\mathcal{P}$ if one uses \Cref{eq:trop_proj} to compute the tropical projection $\pi_\mathcal{P}(y)$ rather than the single-linkage algorithm. Recall, however, that \Cref{eq:trop_proj} requires us to have all the vertices of $\mathcal{P}$, which is not feasible for our problem.
\end{remark}

Using \cref{algo:trop_HAR_ultra}, we sample $N=1,000$ ultrametric from $\Uspace_6$, and record the topology\footnote{By ``topology'' of ultrametric $u$, we mean the index of the minimal closed cone of $\bergFan(K_6)$ containing $u$. Our \emph{acceptance} criteria in each case is that the sample has a fully-resolved topology, meaning that $u^{(k)}$ is contained in the relative interior of a maximal cone of $\bergFan(K_n)$. For $n=6$ (a small number of leaves), full enumeration of the generators (vertices) of $\bergFan(K_n)$ is manageable. As each cone is defined by a subset of these generators, checking containment can be performed as a linear programming feasibility problem. Other (faster) methods also exist.} of each collected tree metric. We sampled $y$ uniformly from the unit cube. For comparative purposes, we also include the ultrametrics obtained by excluding the ``run'' step, i.e., simply setting $u^{(k)}\leftarrow\tilde{y}$  in each iteration without sampling from the tropical line segment. Intuitively, we expect there to be a bias associated with ``no run'' sampling that is proportional to the \emph{size}\footnote{There are different notions of \emph{size} one can use. Here, we use the Euclidean volume of the cones' intersection with the unit hypercube. This is also why we sample $y$ from a hypercube as opposed to, say, the unit hypersphere - the intersections are easier to compute in our case.} of the topology's corresponding cone in $\tilde{\mathcal{B}}(K_n)$. A reasonable expectation might be that ``larger'' topologies (in the sense of their cones' size) will be sampled more frequently. We include a normalized measure of each cone's volume in our results and sort the topologies according to their relative sizes (largest to smallest). The results of this comparison are shown in \cref{fig:proj_HAR}.

\begin{figure}[h!]
    \centering
    \includegraphics[width=0.98\linewidth]{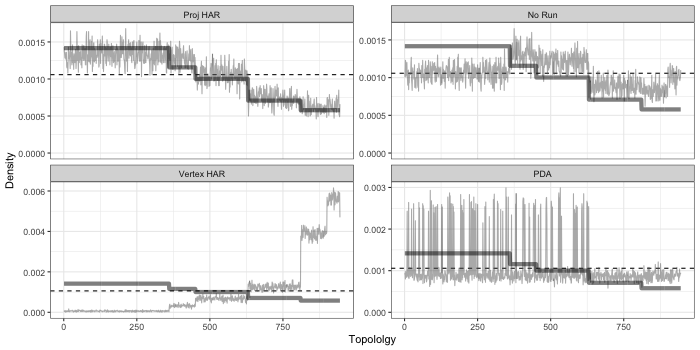}
    \caption{Density plots by tree topology for $N=1,000$ samples obtained using variations of the tropical HAR algorithm. The projective HAR method (top-left) utilizes \Cref{algo:trop_HAR_ultra}. The ``No Run'' method (top-right) also uses \Cref{algo:trop_HAR_ultra}, but skips lines 5-6 and sets $u^{(k)}\gets \tilde{y}$ directly. Vertex HAR (bottom-left) modifies \Cref{algo:trop_HAR_ultra} in line 3 by sampling $y$ from (the vertices of) $\mathcal{P}$ directly. PDA (bottom-right) modifies line 3 by computing $y$ using the \textsc{R} package \textsc{TreeTools}. The piecewise-linear lines represent the normalized cone volumes for each corresponding topology.}
    \label{fig:proj_HAR}
\end{figure}

It is interesting to note that HAR's sampling frequency is much more closely aligned with volume than that of the ``no run'' method, though both methods over- and under-sample considerably across the range of topologies. We also include the vertex HAR method, which selects $y$ from amongst the vertices of $\bergFan(K_6)$ at each step, and the PDA method from the \textsc{TreeTools}\footnote{By default, the \textsc{RandomTrees} function of \textsc{TreeTools} does not produce ultrametrics, however, such trees can be made into ultrametrics without changing the underlying topology by extending pendant edges as required.} package in \textsc{R}.

\section*{Conclusions}

The tropical Grassmannian offers a unifying, algebraically rigorous framework for understanding tree space. By tropicalizing the Plücker relations, we recover the four-point condition and ensure that every point of $\tropGr(2,n)$ corresponds to a valid tree metric. This reveals that the vast combinatorial complexity of phylogenetic trees can be encoded by relatively few parameters and expressed within the rich structure of tropical geometry. Moreover, the tropical viewpoint connects naturally to ultrametrics, Bergman fans, and novel distance measures such as the tropical metric, each of which provides fresh perspectives on phylogenetic inference.

At the same time, our exploration highlights practical challenges. Sampling from $\tropGr(2,n)$ via uniform draws in $\Gr(2,n)$ produces highly unbalanced trees, distinct from biologically realistic models such as Yule or coalescent processes. Likewise, while the tropical metric is computationally simple and respects combinatorial tree structure, its biological interpretation remains less clear compared to BHV, SPR, or RF distances.

Future work should focus on developing biologically meaningful sampling strategies in tropical space, connecting tropical geometry with stochastic models of evolution, and exploring the use of tropical optimization techniques in inference. In doing so, the tropical Grassmannian has the potential not only to deepen our theoretical understanding of tree space, but also to provide new algorithmic tools for phylogenetic reconstruction in large and complex datasets.

\printbibliography
\newpage

\end{document}